\DeclareMathOperator*{\argmax}{arg\,max}
\newtheorem{proposition}{Proposition}
\newcommand{\ket}[1]{|#1\rangle}
\newcommand{\bra}[1]{\langle #1|}
\begin{document}
\title{Qubit Recycling in Entanglement Distillation \\
\thanks{This work was supported in part by the National Science Foundation under grant OMA-2304118.}
}

\author{Stuart Pelletier, Ruozhou Yu, George Rouskas, Jianqing Liu \\
Department of Computer Science, North Carolina State University, Raleigh, NC 27606, USA \\
E-mail: \{sopellet, ryu5, rouskas, jliu96\}@ncsu.edu
}

\maketitle

\begin{abstract}
Quantum entanglement distillation is a process to extract a small number of high-fidelity entanglement from a large number of low-fidelity ones, which in essence is to trade yield (or survival rate) for fidelity. Among existing distillation approaches, Gisin's local filtering protocol is commonly adopted in photonic quantum systems for distilling entangled photons in polarization basis. Yet, the performance of Gisin's filter is cursed by the same fundamental trade-off between fidelity and yield. To address this challenge, in this work, we propose a protocol to recycle the disposed photons and improve their fidelity by a designed (and optimized) local operator. The key parameters of the proposed protocol are calculated by solving a constrained optimization problem. In so doing, we achieve significantly higher yield of high-fidelity entanglement pairs. We further evaluate the performance of our designed protocol under two common configurations of Gisin's filter, namely full filter and partial filter. Compared with existing distillation protocols, the results demonstrate that our design achieves as much as 31.2\% gain in yield under the same fidelity, while only incurring moderate system complexity in terms of  invested hardware and extra signaling for synchronization.
\end{abstract}

\begin{IEEEkeywords}
Entanglement distillation, Gisin's local filter, POVM, Optimization, Protocol design
\end{IEEEkeywords}

%%%%%%%%%%%%%%%%%%%%%%%%%%%%%%%%%%%%%%%%%%%%%%%%%%%%%%%%%%%%%%%%%%%%
\section{Introduction}
%%%%%%%%%%%%%%%%%%%%%%%%%%%%%%%%%%%%%%%%%%%%%%%%%%%%%%%%%%%%%%%%%%%%
Quantum entanglement as a physical phenomenon in the microscopic world once troubled Einstein who called it ``spooky action at a distance,'' but it was later validated by the well-known Bell inequality test. Nowadays, despite many unanswered scientific questions around quantum entanglement, quantum networks have been widely engineered and deployed around the globe. The common goal of all these quantum networks is to distribute entanglement in large volume and high quality \cite{chen2022heuristic}, as entanglement is central to numerous applications in future quantum internet such as quantum teleportation, quantum computation, and quantum cryptography \cite{yu2022topology,li2022cluster}.

When interacting with the environment like quantum memory and fibre channels, quantum entanglement unavoidably experiences coherence degradation that may lead to entanglement sudden death \cite{almeida2007environment}. The common way to cope with decoherence is entanglement distillation, by which a smaller number of highly entangled states are extracted from a large number of weakly entangled states \cite{bennett1996concentrating}. Among existing entanglement distillation protocols, Bennett's controlled-NOT (CNOT) operation \cite{bennett1996mixed} and Gisin's local filtering operation \cite{gisin1996hidden} are featured as mainstream approaches. Compared with Bennett's approach, Gisin's local filter has two appealing merits: (1) only local operations are needed (i.e., no classical communications); (2) only a single copy of the entangled state is needed (i.e., no ancilla entanglements are scarified). 

Since its inception in 1996, Gisin's local filter has been extensively researched in both theory and experiments for entanglement distillation. In principle, a pair of weakly entangled qubits (and likewise for multipartite ($>$2 qubits) entanglement, such as the GHZ state) can become strongly entangled when passing through Gisin's filters. Any qubits reflected by the filter, however, will have their entanglement weakened, or in some cases, destroyed. Such qubits can either be measured or discarded as they are deemed useless at that point. While this uselessness holds true in many (ideal) cases, for some input states and/or under certain (practical) filter configurations, these reflected qubits are shown to have non-zero concurrence, i.e., they are still entangled despite weak strength. A natural question to ask is whether such reflected qubits can be recycled and turned into strongly entangled states. One can obviously anticipate a much higher yield of usable entanglement if the answer to this question is affirmative. 

To this end, we present in this paper a novel protocol --- consisting of a non-unitary transformation and multi-party agreement on coincidence count --- to harvest and improve the weakly entangled qubits that are reflected by Gisin's filters. To search for the optimal non-unitary operator, we formulate a constrained optimization problem that maximizes the high-fidelity survival rate, i.e., the total entanglement yield with the minimum requirement on their fidelity. The protocol is integrated into and examined under two common filter-based entanglement distillation setups, namely the full filtering and partial filtering schemes. Based on numerical simulations, we demonstrate the superior performance of our qubit-recycling protocol in terms of high-fidelity survival rate compared to existing filter schemes.

The paper is organized as follows. To begin with, we survey the recent advances in entanglement distillation in Section \ref{related}. Next, we introduce the basic concepts that are relevant to our research problem in Section \ref{preliminary}. We then describe the principle and design details of our proposed protocol in Section \ref{recycle}. To evaluate the performance of the protocol, we present the simulation results in Section \ref{performance}. Lastly, we conclude the paper in Section \ref{conclusion} with an outlook for the future work.

%%%%%%%%%%%%%%%%%%%%%%%%%%%%%%%%%%%%%%%%%%%%%%%%%%%%%%%%%%%%%%%%%%%%
\section{Related Works}\label{related}
%%%%%%%%%%%%%%%%%%%%%%%%%%%%%%%%%%%%%%%%%%%%%%%%%%%%%%%%%%%%%%%%%%%%
In this section, we review recent advancements in entanglement distillation that have contributed to the ongoing development of the field. We organize our discussion into three subtopics: (1) distillation of multipartite states, which extends the scope of entanglement distillation beyond simple bipartite systems; (2) distillation using hyperentanglement, an emerging approach that utilizes multiple modes of entanglement to enhance  distillation; and (3) distillation using reset-and-reuse operations in a quantum computer, a novel methodology that employs the inherent capabilities of quantum computing hardware to facilitate the distillation process by recycling and re-entangling ancilla qubits. By examining these recent developments, we aim to provide an overview of the current state of entanglement distillation research and highlight the significance and novelty of our proposed qubit recycling protocol.

\subsubsection{Distillation of Multipartite Entanglement States}

The distillation of multipartite entangled states, such as GHZ states, has garnered attention due to the advantages of entanglement being shared between more than two parties. Huang et al.~\cite{huang2014distillation} proposed a single-copy-based distillation scheme for amplitude-damped W states and amplitude-damped GHZ states. De Bone et al.~\cite{de2020protocols} investigated the creation and distillation of GHZ states out of nonperfect Bell pairs. They introduced a heuristic dynamic programming algorithm to optimize protocols for creating and purifying GHZ states.

\subsubsection{Distillation Utilizing Hyperentanglement}

Utilizing hyperentanglement has been explored as a promising technique for enhancing entanglement distillation schemes. Zhou and Sheng \cite{zhou2021high} proposed an efficient two-step entanglement purification protocol for polarization entanglement using a single copy of states by utilizing hyperentanglement in the time bin and spatial modes. Ecker et al. \cite{ecker2021experimental} experimentally demonstrated single-copy entanglement distillation using pairs of single photons entangled in both the polarization and energy-time domains. 

\subsubsection{Reset-and-Reuse}

In recent work by Germain et al. \cite{germain2022quantum}, the authors explore the potential of a reset-and-reuse operation in quantum computers to substantially reduce yield loss in entanglement distillation protocols. They implement multi-pass distillation schemes, specifically BBPSSW and DEJMPS, and test them on the IBM-Q environment. This reset-and-reuse feature shows a significant minimization in the number of qubits required for distillation, bringing the number of qubits required per pass down from exponential to constant --- a notably large improvement. It should be noted that such a reset-and-reuse operation, while available in quantum computers, is not currently available in a quantum network setting, as there are many challenges associated with re-entangling distance-separated ancillary qubits after measurement. Our work proposes a novel single-copy qubit recycling protocol which does not require any such re-entangling and can thus be used by a quantum network with currently available hardware.

%%%%%%%%%%%%%%%%%%%%%%%%%%%%%%%%%%%%%%%%%%%%%%%%%%%%%%%%%%%%%%%%%%%%
\section{Preliminaries}\label{preliminary}
%%%%%%%%%%%%%%%%%%%%%%%%%%%%%%%%%%%%%%%%%%%%%%%%%%%%%%%%%%%%%%%%%%%%
\subsection{Gisin's Local Filter}
In the demonstrative experiment by Kwiat \cite{kwiat2001experimental}, Gisin's local filter was realized by a series of coated glass slabs, tilted against the vertical axis by the Brewster's angle, as shown by an example in Fig.~\ref{fig:bp}. By adjusting the configuration of these slabs (e.g., angles and coated materials), the transmission probability $T_{H}$ (resp., $T_{V}$) for horizontally (resp. vertically) polarized incident photons can be tuned, owing to the well-known polarization-dependent reflectivity \cite{born2013principles}. As a result, undesired states (i.e., noises) can be selectively blocked (and reflected in another direction), thus leaving the surviving photons to be more concentrated in the desired entangled states. 
\begin{figure}
\centering
  \includegraphics[width=0.65\linewidth]{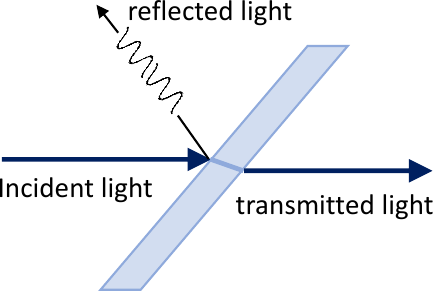}
  \caption{Gisin's filter implemented by a Brewster plate.}
  \label{fig:bp}
\end{figure}
In theory, the Gisin's local filter can be modeled as a positive operator-valued measurement (POVM), namely $\{M_0, M_1\}$ where $M_0 = \big(\begin{smallmatrix}
  \alpha & 0\\
  0 & \beta
\end{smallmatrix}\big)$ and $M_1 = I - M_0$ are positive semi-definite Hermitian. $M_0$ ($M_1$ likewise) is realized by the projector $m_0 = \sqrt{\alpha}\ket{0}\bra{0}+\sqrt{\beta}\ket{1}\bra{1}$ and $M_0 = m_0 * m_0^{\dagger}$. When implementing the POVM (or Gisin's filter) in  photonic systems, $\alpha$ and $\beta$ respectively denote the transmission probability $T_{H}$ and $T_{V}$ of the glass slabs. That is to say, the design of Gisin's local filter is boiled down to the construction of $\alpha'$s and $\beta'$s. 

\subsection{Channel Decoherence Model}
In this work, we consider a (photonic) quantum network that distributes EPR pairs between any two arbitrary nodes. An entanglement source (ES) generates EPR pairs by directing a laser beam at a BBO (beta-barium borate) crystal. Without loss of generality, the EPR pair in the state of $\ket{\Phi^+}$ = $\frac{1}{\sqrt{2}}(\ket{00}+\ket{11})$ or $\rho$ = $\ket{\Phi^+}\bra{\Phi^+}$ is assumed. 

Then, each qubit of the EPR pair is distributed to Alice and Bob through independent decoherence channels. We consider the amplitude damping model in which state $\ket{1}$ may decay into $\ket{0}$. Mathematically, an amplitude damping channel $\mathcal{E}$ is described by the following super-operators, a.k.a, Kraus operators:
\begin{equation}\label{Damping_Kraus_Operators}
    E^{i}_{0} = \begin{bmatrix}
        1&0\\
        0&\sqrt{\bar{\gamma_i}}
    \end{bmatrix}, \quad E^{i}_{1} = \begin{bmatrix}
        0&\sqrt{\gamma_i}\\
        0&0
    \end{bmatrix},
\end{equation}
where $i \in \{A,B\}$, $\gamma_i  = 1 - e^{-{t_i}/{T_1}}$ is a time-dependent damping factor in which $T_1$ is defined as the time it takes for the $\ket{1}$ state to settle into the $\ket{0}$ (vice versa). Denote $\bar{\gamma_i} = 1 - \gamma_i$. After channel decoherence, the received state at Alice and Bob is 
\begin{equation}\label{state_after_channel}
    \rho^{\prime} = \mathcal{E}(\rho)=\sum^{1}_{j=0} \sum^{1}_{k=0} \left(E_j^A \otimes E_k^B\right) \rho \left(E_j^{A} \otimes E_k^{B}\right)^{\dagger}.
\end{equation}

For the sake of notation simplicity, in the remainder of this paper, we consider the same fading channel for ES-A and ES-B, i.e., $\gamma = \gamma_A = \gamma_B$.

%%%%%%%%%%%%%%%%%%%%%%%%%%%%%%%%%%%%%%%%%%%%%%%%%%%%%%%%%%%%%%%%%%%%
\section{Design Principles of Qubit Recycling}\label{recycle}
%%%%%%%%%%%%%%%%%%%%%%%%%%%%%%%%%%%%%%%%%%%%%%%%%%%%%%%%%%%%%%%%%%%%
In this section, we consider two common entanglement distillation setups in the literature, with one being that both Alice and Bob implement Gisin's local filters (coined as ``full filtering'') while the other being that either Alice or Bob implements a Gisin's local filter (coined as ``partial filtering''). While both setups have their merits, we will investigate the best use case of our proposed qubit-recycling idea and how much gain it can offer. 
\subsection{Qubit Recycling under Full Filtering}
\subsubsection{Typical full filtering design} To offset the decoherence incurred by the amplitude damping channel and restore the received state $\rho^{\prime}$ closer to its original entanglement state $\rho$, Alice and Bob implement Gisin's local filters, which are mathematically defined as the POVMs $\{M_{A,0}, M_{A,1}\}$ and $\{M_{B,0}, M_{B,1}\}$ respectively, for entanglement distillation. We consider the local filters performed by Alice and Bob described by the operation: 
\begin{equation}\label{POVM}
    M_{i,0} = \begin{bmatrix}
        \alpha_i & 0\\
        0 & \beta_i 
    \end{bmatrix},M_{i,1} = \begin{bmatrix}
        \beta_i  & 0\\
        0 & \alpha_i
    \end{bmatrix},
\end{equation}
where $\alpha_i,\beta_i \in (0,1)$ and $\alpha_1 + \beta_i = 1$ complying with the POVM's property. In existing work, full filtering schemes have been widely explored, wherein Alice and Bob each distills her/his respective qubit independently. This process is mathematically described by applying POVMs on both qubits. We refer to the state after undergoing both filters, i.e., the state Alice and Bob want to keep, as 
\begin{equation}\label{Transmitted_Full_Filtered_State}
    \Tilde{\rho}_{11} = \frac{1}{S_{11}}(\sqrt{M_{A,1}} \otimes \sqrt{M_{B,1}}) \rho^{\prime} (\sqrt{M_{A,1}} \otimes \sqrt{M_{B,1}})^{\dagger}.
\end{equation}
where $S_{11}$ is the normalization factor that is $S_{11} = \text{Tr}\{(\sqrt{M_{A,1}} \otimes \sqrt{M_{B,1}}) \rho^{\prime} (\sqrt{M_{A,1}} \otimes \sqrt{M_{B,1}})^{\dagger}\}$. The value of $S$ represents the likelihood that both Alice's and Bob's qubits pass through the Gisin's local filters, thus can be considered as the success probability, or survival rate, of the distillation process. Note that as we consider indentical channels for ES-A and ES-B, that is $\gamma_A = \gamma_B$, Alice's and Bob's filter will have the same configurations. Therefore, we can drop the subscript for A and B and simply let $\alpha = \alpha_A = \alpha_B$ (likewise for $\beta$). 

The calculation of the POVM parameters $\{\alpha, \beta\}$ is usually performed by solving a constrained optimization problem that seeks to maximize the high-fidelity yield, i.e. the success probability while meeting a minimum requirement on the entanglement fidelity. The reason for posing a hard constraint on fidelity is because some quantum applications (e.g., QKD) have a stringent requirement on the minimum fidelity to be considered usable (e.g., satisfying a minimum secret key rate) \cite{zhang2022device}. Mathematically,
\begin{equation}\label{fidelity_optimization_full}
\{\alpha^*, \beta^*\} = \argmax_{\{\alpha, \beta\}} \, S_{11}; \quad \text{s.t.} \, \text{Tr}\left[\sqrt{\sqrt{\rho}\Tilde{\rho}_{11}\sqrt{\rho}}\right]^2 \geq  F_{th}.
\end{equation}
This problem is a typical multivariate quadratic optimization problem, which can be easily proven to be convex by checking the second order derivatives of the objective and constraint functions. By Slater’s condition, the necessary and sufficient conditions for a solution $\{\alpha^*, \beta^* \}$ to be the optimal solution are the KKT conditions. 

\subsubsection{Residue entanglement in reflected qubits} When Alice's and Bob's local filters are configured using the parameters $\{\alpha^*, \beta^* \}$, a photon pair passing through both filters is guaranteed to have a desired fidelity level. Yet, with such optimal filter configuration, the reflected qubit(s) could still be usable in the sense that there is a certain degree of entanglement remained. 

\begin{proposition}
Suppose an EPR pair passes through an amplitude damping channel with parameter $\gamma$ and is filtered using Gisin's local filter with a POVM with parameters $\{\alpha,\beta\}$. The resulting state of the reflected photons, $\tilde{\rho}_{00}$, is entangled when $\alpha,\beta \neq 0$ and $\gamma \neq 1$.
\end{proposition}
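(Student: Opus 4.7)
The approach is a direct computation of $\tilde{\rho}_{00}$ followed by the Peres--Horodecki partial transpose (PPT) criterion, which for two-qubit states is both necessary and sufficient for entanglement. First, I would substitute $\rho=\ket{\Phi^+}\bra{\Phi^+}$ into the damping map \eqref{state_after_channel}. Since each Kraus operator in \eqref{Damping_Kraus_Operators} acts on the computational basis either diagonally or as the lowering map $\ket{1}\mapsto\ket{0}$, a short expansion shows that $\rho'$ is an X-state: its only nonzero entries sit on the main diagonal and the $(1,4),(4,1)$ anti-diagonal positions, with the coherence $\rho'_{14}$ proportional to $\bar{\gamma}$ and the two middle diagonal entries $\rho'_{22}=\rho'_{33}$ proportional to $\gamma\bar{\gamma}$.

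Next I would apply the reflected POVM operator $\sqrt{M_{A,0}}\otimes\sqrt{M_{B,0}}$, which by \eqref{POVM} is diagonal in the computational basis with entries $(\alpha,\sqrt{\alpha\beta},\sqrt{\alpha\beta},\beta)$. Because a diagonal congruence preserves the X-state structure and simply rescales each surviving entry, after normalization by $S_{00}$ the reflected state $\tilde{\rho}_{00}$ is itself X-shaped with $(\tilde{\rho}_{00})_{22}=(\tilde{\rho}_{00})_{33}\propto \alpha\beta\gamma\bar{\gamma}$ and coherence $(\tilde{\rho}_{00})_{14}\propto \alpha\beta\bar{\gamma}$, under a common positive factor $1/(2S_{00})$.

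Finally, I would invoke PPT. For any X-state with vanishing $(2,3)$ entry, the partial transpose decomposes into two $1{\times}1$ blocks (always nonnegative) and one $2{\times}2$ block whose determinant equals $(\tilde{\rho}_{00})_{22}(\tilde{\rho}_{00})_{33}-|(\tilde{\rho}_{00})_{14}|^2$; the state is entangled precisely when this determinant is strictly negative. Substituting the two expressions above, the inequality reduces to $\alpha^2\beta^2\bar{\gamma}^2(\gamma^2-1)<0$, which holds exactly when the prefactor $\alpha^2\beta^2\bar{\gamma}^2$ is strictly positive and $\gamma<1$, i.e., precisely under the hypotheses $\alpha,\beta\neq 0$ and $\gamma\neq 1$. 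The only real obstacle is bookkeeping the sixteen matrix entries through the two super-operators, which is kept painless by the observation that every operator involved preserves the X-state structure, so no coherences off the anti-diagonal can appear.
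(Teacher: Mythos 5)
Your proposal is correct and follows essentially the same route as the paper: compute $\tilde{\rho}_{00}$ explicitly, observe it is an X-state, and apply the Peres--Horodecki PPT criterion, with entanglement witnessed by the negativity condition $\alpha^2\beta^2(1-\gamma)^2(\gamma^2-1)<0$. The only cosmetic difference is that you detect the negative eigenvalue via the determinant of the $2\times 2$ block of the partial transpose, whereas the paper lists all four eigenvalues and isolates $\lambda_1=-\tfrac{1}{2}\alpha\beta(1-\gamma)^2$; these are equivalent.
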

\begin{proof}
Note that $\tilde{\rho}_{00} = $
\begin{align*}\label{RR_Density_Matrix}
\begin{bmatrix}
\alpha^2\left(\frac{1}{2}+\frac{\gamma^2}{2}\right) & 0 & 0 & \frac{1}{2}\alpha\beta\left(1-\gamma\right) \\
0 & \frac{1}{2}\alpha\beta\left(1-\gamma\right)\gamma & 0 & 0 \\
0 & 0 & \frac{1}{2}\alpha\beta\left(1-\gamma\right)\gamma & 0 \\
\frac{1}{2}\alpha\beta\left(1-\gamma\right) & 0 & 0 & \frac{1}{2}\beta^2\left(1-\gamma\right)^2
\end{bmatrix}.
\end{align*}
This density matrix is separable if and only if its partial transpose is positive \cite{doherty2004complete}. This is called the PPT condition, which is equivalent to the condition that its partial transpose has exclusively non-negative eigenvalues. In other words, if at least one of its eigenvalues is negative, then the state $\tilde{\rho}_{00}$ is entangled. Note that its partial transpose\footnote{The partial transpose generally is taken with respect to one qubit, corresponding to either Alice's or Bob's qubit. However, the eigenvalues of the partial transpose are invariant under which qubit the partial transpose is taken on, because the partial transpose with respect to Alice's qubit is equal to the transpose of the partial transpose taken with respect to Bob's qubit. In this case, then, since the partial transpose is symmetric, it is the same partial transpose matrix for both Alice's and Bob's qubits.} is the density matrix
\begin{equation*}\label{RR_PPT}
\begin{bmatrix}
\alpha^2 \left(\frac{1}{2} + \frac{\gamma^2}{2}\right) & 0 & 0 & 0 \\
0 & \frac{1}{2}\alpha\beta(1-\gamma)\gamma & \frac{1}{2}\alpha\beta(1-\gamma) & 0 \\
0 & \frac{1}{2}\alpha\beta(1-\gamma) & \frac{1}{2}\alpha\beta(1-\gamma)\gamma & 0 \\
0 & 0 & 0 & \frac{1}{2}\beta^2(1-\gamma)^2
\end{bmatrix}
\end{equation*}
which has eigenvalues
\begin{equation}
\begin{aligned}
\lambda_1 &= -\frac{1}{2}\alpha\beta(-1+\gamma)^2, \quad \lambda_2 = \frac{1}{2}\beta^2(-1+\gamma)^2, \\
\lambda_3 &= \frac{1}{2}\alpha^2(1+\gamma^2), \qquad \,\,\, \lambda_4 = \frac{1}{2}\alpha\beta(1 - \gamma^2).
\end{aligned}
\end{equation}
Note that $\lambda_2,\lambda_3$ and $\lambda_4$ all take on non-negative values for all $\alpha,\beta,\gamma \in [0,1]$. The eigenvalue $\lambda_1$, however, takes on a negative value except when $\alpha=0$, $\beta=0$ or $\gamma=1$. Therefore, $\tilde{\rho}_{00}$, is entangled when $\alpha,\beta \neq 0$ and $\gamma \neq 1$.
\end{proof}
\begin{figure}
\centering
  \includegraphics[width=\linewidth]{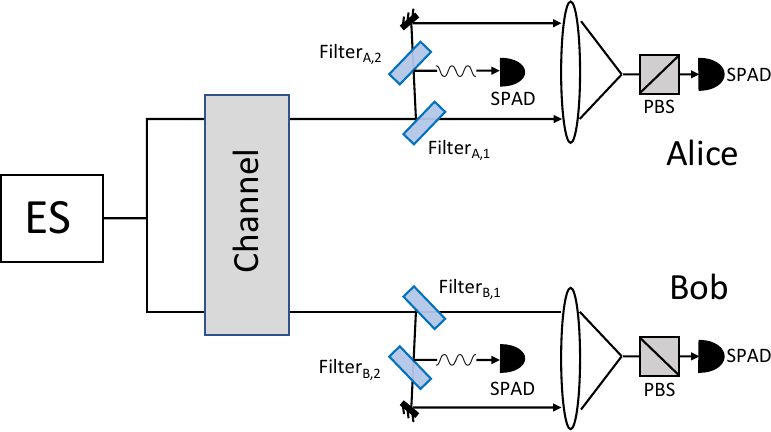}
  \caption{Qubit recycling under the full filtering setup.}
  \label{fig:full_filter}
\end{figure}

\subsubsection{Recycling reflected qubits} In light of the remaining usable entanglement in the reflected qubits, we propose a second Gisin's local filter, denoted as $\texttt{Filter}_\texttt{{A/B,2}}$, to harvest them. The basic idea is shown in Fig. \ref{fig:full_filter}, in which the reflected qubits are distilled by another filter. Then, the two light paths are integrated and analyzed by a single-photon avalanche detector (SPAD). Note that a small portion of the reflected qubits from $\texttt{Filter}_\texttt{{*,1}}$ will be reflected by $\texttt{Filter}_\texttt{{*,2}}$ again. While they can be looped back for further recycling, we choose to measure them as their entanglement strength becomes much weaker than that observed when they are only reflected once. Technically, by calculating the concurrence following Proposition 1, we can show that the entanglement strength progressively deteriorates as qubits are reflected by each subsequent filter. 

To determine the optimal configurations of $\texttt{Filter}_\texttt{{*,2}}$, let us first define an outcome space for $\texttt{Filter}_\texttt{{*,1}}$ as $\Omega_1$ = $\{T_{A,1}T_{B,1}, \, T_{A,1}R_{B,1}, \, R_{A,1}T_{B,1}, \, R_{A,1}R_{B,1} \}$. For example, the outcome $w = R_{A,1}T_{B,1}$ implies that Alice's qubit is reflected while Bob's is transmitted. In the traditional full filtering scheme, this outcome would be considered a failure because no coincidence click is observed. In addition, we can define the outcome space for the second-tier local filters $\Omega_2$ = $\{$ $\emptyset_{A,2}\emptyset_{B,2}$, $T_{A,2}\emptyset_{B,2}$,  $R_{A,2}\emptyset_{B,2}$, $\emptyset_{A,2}T_{B,2}$, $\emptyset_{A,2}R_{B,2}$, $T_{A,2}T_{B,2}$, $T_{A,2}R_{B,2}$, $R_{A,2}T_{B,2}$, $R_{A,2}R_{B,2} \}$ in which $\emptyset$ is an null event that implicitly tells that no qubit arrives at this filter. Among these possible outcomes, we collect the outcomes which result in the final distilled entanglement in a set $\Omega_\checkmark$ = $\{T_{A,1}T_{B,1} \land \emptyset_{A,2}\emptyset_{B,2}$, $T_{A,1}R_{B,1} \land \emptyset_{A,2}T_{B,2}$, $R_{A,1}T_{B,1} \land T_{A,2}\emptyset_{B,2}$, $R_{A,1}R_{B,1} \land T_{A,2}T_{B,2}\}$ which gives us the survival rate $P_\checkmark = \sum_{i=1}^{4} \text{Pr}(\omega_i \in \Omega_\checkmark)$.

Specifically, the survival rates for the four cases in $\Omega_\checkmark$ are respectively calculated as follows
\begin{equation*}
\text{Pr}(T_{A,1}T_{B,1} \land \emptyset_{A,2}\emptyset_{B,2}) = S_{11}
\end{equation*}
\begin{equation*}
\begin{aligned}
    \text{Pr}(T_{A,1}R_{B,1} &\land \emptyset_{A,2}T_{B,2}) = \\
    &\text{Tr}\{(\sqrt{M_{A,1}} \otimes \sqrt{M_{B,0}}) \rho^{\prime} (\sqrt{M_{A,1}} \otimes \sqrt{M_{B,0}})^{\dagger}\} \\
    &\times \text{Tr}\{(I \otimes \sqrt{M_{B,1}^{'}}) \Tilde{\rho}_{10} (I \otimes \sqrt{M_{B,1}^{'}})^{\dagger}\}
\end{aligned}
\end{equation*}
\begin{equation*}
\begin{aligned}
    \text{Pr}(R_{A,1}T_{B,1} &\land T_{A,2}\emptyset_{B,2}) = \\
    &\text{Tr}\{(\sqrt{M_{A,0}} \otimes \sqrt{M_{B,1}}) \rho^{\prime} (\sqrt{M_{A,0}} \otimes \sqrt{M_{B,1}})^{\dagger}\} \\
    &\times \text{Tr}\{(\sqrt{M_{A,1}^{'}} \otimes I) \Tilde{\rho}_{01} (\sqrt{M_{A,1}^{'}} \otimes I)^{\dagger}\}
\end{aligned}
\end{equation*}
\begin{equation*}
\begin{aligned}
    \text{Pr}(R_{A,1}R_{B,1} &\land T_{A,2}T_{B,2}) = \\
    &\text{Tr}\{(\sqrt{M_{A,0}} \otimes \sqrt{M_{B,0}}) \rho^{\prime} (\sqrt{M_{A,0}} \otimes \sqrt{M_{B,0}})^{\dagger}\} \\
    &\times \text{Tr}\{(\sqrt{M_{A,1}^{'}} \otimes \sqrt{M_{B,1}^{'}}) \Tilde{\rho}_{00} (\sqrt{M_{A,1}^{'}} \otimes \sqrt{M_{B,1}^{'}})^{\dagger}\}
\end{aligned}
\end{equation*}
where the second-tier filter's POVM operator is captured by $\{M_{A/B,0}^{'}, M_{A/B,1}^{'}\}$. Moreover, for any cases in $\omega_i \in \Omega_\checkmark$, we denote the output quantum state as $\hat{\rho}_{11, \omega_i}$ which can be calulated similar to Eq.~(\ref{Transmitted_Full_Filtered_State}).

Then, the search of optimal $\{\alpha', \beta'\}$ for the POVM operator $\{M_{A/B,0}^{'}, M_{A/B,1}^{'}\}$ of $\texttt{Filter}_\texttt{{*,2}}$ is formulated as the following optimization problem.
\begin{equation}\label{recycling_optimization_full}
\begin{aligned}
&\{\alpha'^*, \beta'^*\} = \\ 
&\argmax_{\{\alpha', \beta'\}} \sum_{i=1}^{4}\, \text{Pr}(\omega_i \in \Omega_\checkmark) \, \cdot \, \mathbbm{1}(\text{Tr}\left[\sqrt{\sqrt{\rho}\hat{\rho}_{11, \omega_i}\sqrt{\rho}}\right]^2 \geq  F_{th}),
\end{aligned}
\end{equation}
in which $\mathbbm{1}(\cdot)$ is the indicator function that is 1 if its provided statement is true; and 0 otherwise.

\subsection{Qubit Recycling Under Partial Filtering}
Partial filtering is another widely adopted configuration in entanglement distillation for its higher survival rate. In its setup, depending on which channel has stronger decoherence, only one of Alice or Bob implements a local filter. This setup naturally gives rise to a higher survival rate without losing too much of the fidelity. Since this paper considers identical channel decoherence on ES-A and ES-B, there is no difference of placing a filter on Alice's or Bob's end. Therefore, without loss of generality, we consider the setup  in which Alice filters her qubit, while Bob does not. 

First of all, examining the single-filter case, we call the state transmitted by $\texttt{Filter}_\texttt{{A,1}}$, i.e., the state Alice and Bob want to keep in a traditional partial filtering design without qubit recycling, as 
\begin{equation}\label{Transmitted_Partial_Filtered_State}
    \Tilde{\rho}_{1} = \frac{1}{S_{1}}(\sqrt{M_{A,1}} \otimes I) \rho^{\prime} (\sqrt{M_{A,1}} \otimes I)^{\dagger},
\end{equation} where $S_1$ is a the normalization factor. The goal is find the optimized parameters for $\texttt{Filter}_\texttt{{A,1}}$ by solving a fidelity-constrained yield-maximization problem similar to (\ref{fidelity_optimization_full}). Mathematically,
\begin{equation}\label{fidelity_optimization_partial}
\{\alpha^*, \beta^*\} = \argmax_{\{\alpha, \beta\}} \, S_{1}; \quad \text{s.t.} \, \text{Tr}\left[\sqrt{\sqrt{\rho}\Tilde{\rho}_{1}\sqrt{\rho}}\right]^2 \geq  F_{th}.
\end{equation}

Moreover, we define the outcome space of $\texttt{Filter}_\texttt{{A,1}}$ as $\Omega'_1$ = $\{T_{A,1}, R_{A,1}\}$, and that of $\texttt{Filter}_\texttt{{A,2}}$ as $\Omega'_2$ = $\{ \emptyset_{A,2}, T_{A,2}, R_{A,2}\}$. Analogously to the full filter case, we collect the outcomes which result in a distilled entanglement pair, giving us the set $\Omega'_\checkmark$ = $\{T_{A,1} \land \emptyset_{A,2}, R_{A_1} \land T_{A,2}\}$ and the survival rate $P'_\checkmark = \sum_{i=1}^{2} \text{Pr}(\omega_i \in \Omega'_\checkmark)$. This similarly leads us to the following analogous constrained optimization problem.
\begin{equation}\label{recycling_optimization_partial}
\begin{aligned}
&\{\alpha'^{*}, \beta'^{*}\} = \\ 
&\argmax_{\{\alpha', \beta'\}} \sum_{i=1}^{2}\, \text{Pr}(\omega'_i \in \Omega'_\checkmark) \, \cdot \, \mathbbm{1}(\text{Tr}\left[\sqrt{\sqrt{\rho}\hat{\rho}_{1, \omega'_i}\sqrt{\rho}}\right]^2 \geq  F_{th}),
\end{aligned}
\end{equation}.
%%%%%%%%%%%%%%%%%%%%%%%%%%%%%%%%%%%%%%%%%%%%%%%%%%%%%%%%%%%%%%%%%%%%
\section{Performance Evaluation}\label{performance}
%%%%%%%%%%%%%%%%%%%%%%%%%%%%%%%%%%%%%%%%%%%%%%%%%%%%%%%%%%%%%%%%%%%%
\begin{figure*}[!t]
\centering
\captionsetup{justification=centering}
\begin{subfigure}[t]{0.33\textwidth}
  \includegraphics[width=\linewidth]{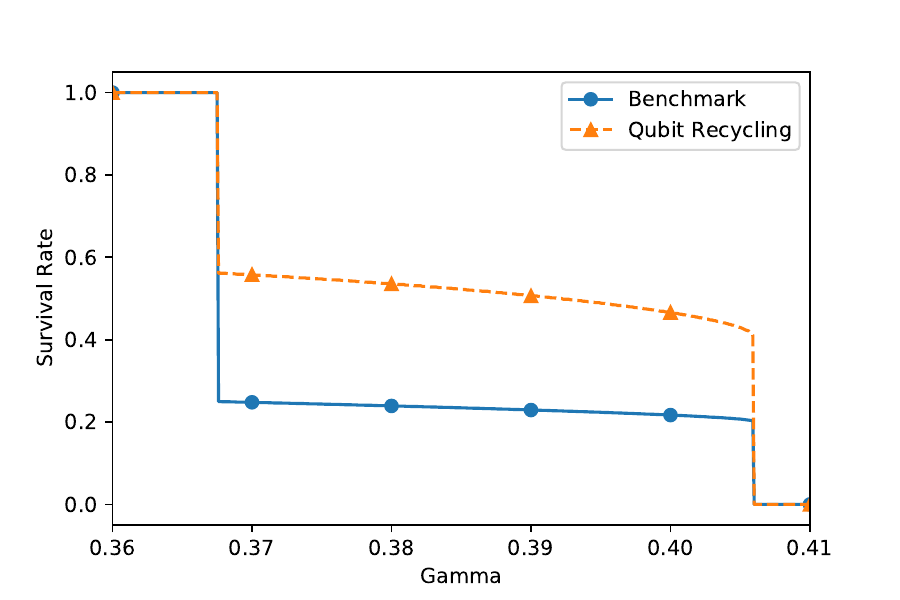}
  \caption{\footnotesize Full Filter with $F_{th} = 0.7$} \label{full_0.7}
\end{subfigure}\hfill
\begin{subfigure}[t]{0.33\textwidth}
  \includegraphics[width=\linewidth]{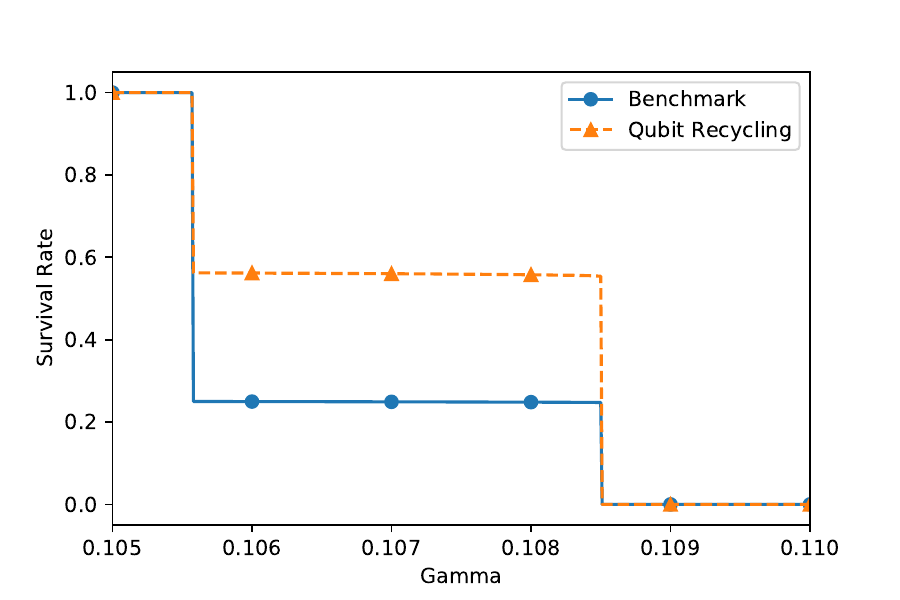}
  \caption{\footnotesize Full Filter with $F_{th} = 0.9$} \label{full_0.9}
\end{subfigure}\hfill
\begin{subfigure}[t]{0.33\textwidth}
  \includegraphics[width=\linewidth]{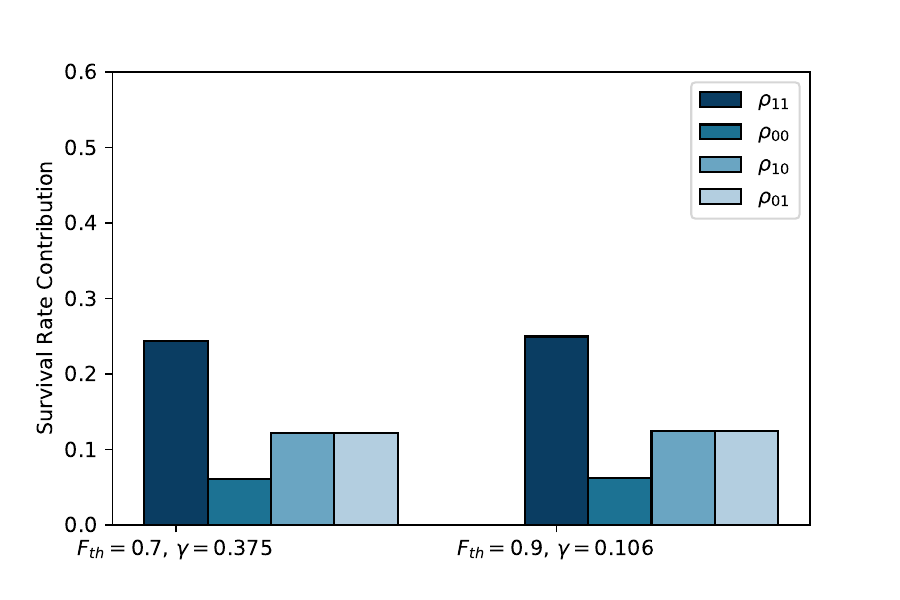}
  \caption{\footnotesize Full Filter outcome breakdown} \label{full_histogram}
\end{subfigure}\hfill
\\
\vspace{0.05in}
\begin{subfigure}[t]{0.33\textwidth}
  \includegraphics[width=\linewidth]{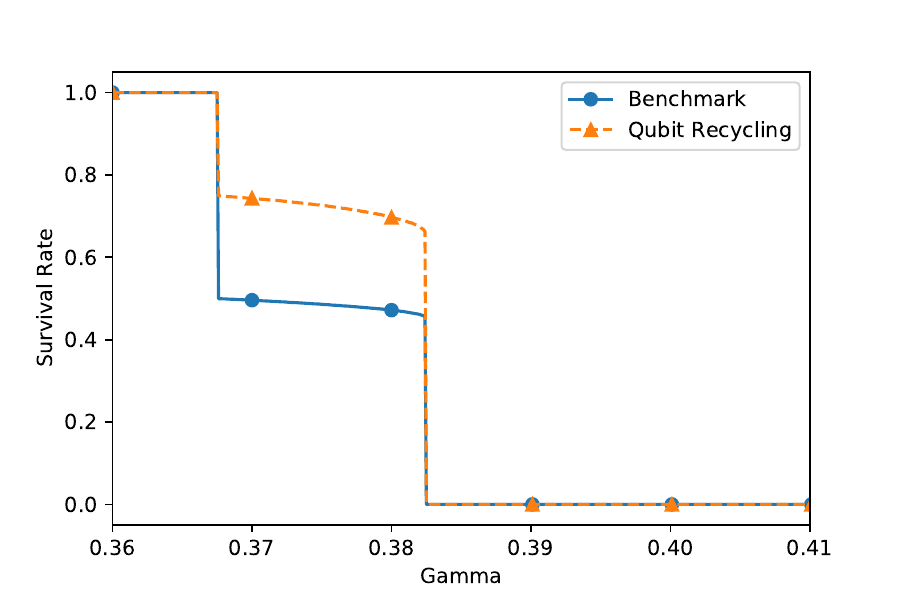}
  \caption{\footnotesize Partial Filter with $F_{th} = 0.7$} \label{partial_0.7}
\end{subfigure}\hfill
\begin{subfigure}[t]{0.33\textwidth}
  \includegraphics[width=\linewidth]{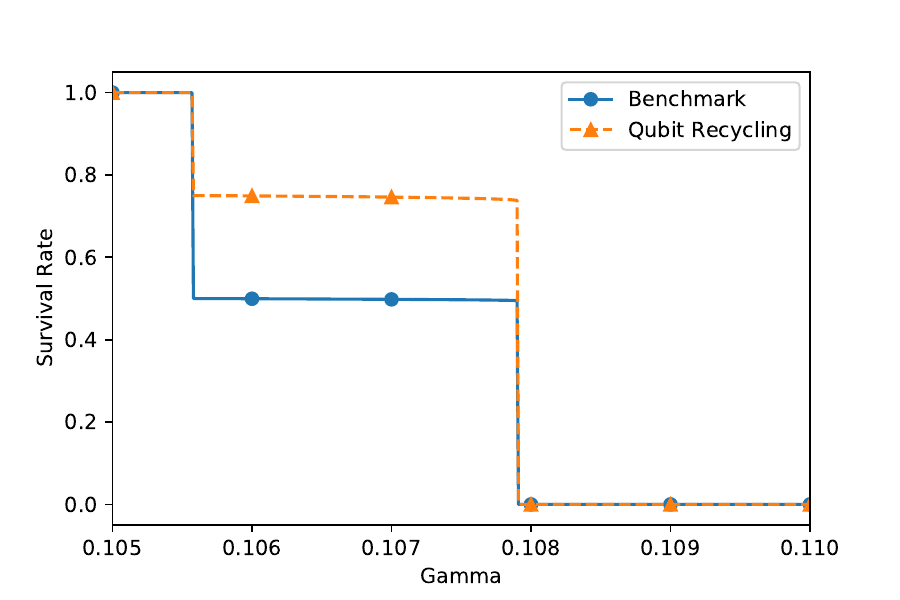}
  \caption{\footnotesize Partial Filter with $F_{th} = 0.9$} \label{partial_0.9}
\end{subfigure}\hfill
\begin{subfigure}[t]{0.33\textwidth}
  \includegraphics[width=\linewidth]{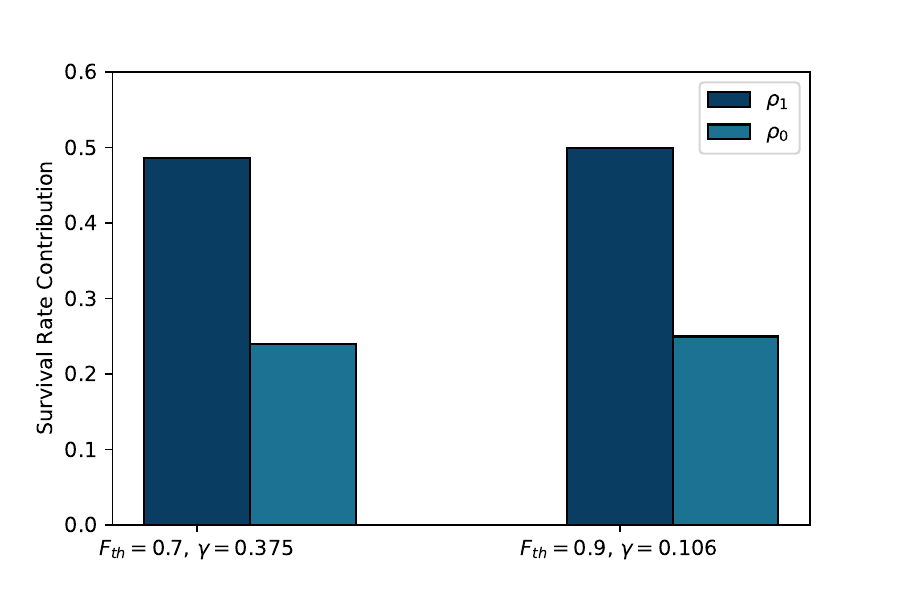}
  \caption{\footnotesize Partial Filter outcome breakdown} \label{partial_histogram}
\end{subfigure}\hfill
   \caption{\small (a, b) The survival rates with respect to $\gamma$ for given $F_{th}$ values for the full filtering scheme and (d, e) for the partial filtering scheme. A breakdown of which outcomes contribute to survival rate for a given $\gamma$ and $F_{th}$ is plotted for (c) the full filtering and (f) the partial filtering schemes.} \label{results}
\end{figure*}

\subsection{Simulation Methodology}
In order to evaluate the performance of our proposed qubit recycling protocol, we developed a simulation model which solves the constrained optimization problems (\ref{fidelity_optimization_full}), (\ref{recycling_optimization_full}), (\ref{fidelity_optimization_partial}), and (\ref{recycling_optimization_partial}). The simulation is implemented in Python, and consists of the following steps:

\begin{enumerate}
  \item \textbf{Initialization}: At the beginning of the simulation, the initial parameters and constraints of the problem are defined. The quantum system $\rho$ is prepared, we define a range of $\gamma$ values to evaluate, and we fix our $F_{th}$ value. Specifically, $F_{th}$ values of 0.7 and 0.9 were selected.
  
  \item \textbf{First filter parameter optimization}: The simulation first assumes a single filter model as a benchmark, and refines the parameters of the local POVM operator $\texttt{Filter}_\texttt{{*,1}}$ through an iterative optimization algorithm. The optimization process iterates through the given $\gamma$ value range for our given $F_{th}$ value and finds the $\{\alpha, \beta\}$ values which respectively maximize (\ref{fidelity_optimization_full}) and (\ref{fidelity_optimization_partial}). 

  \item \textbf{Second filter parameter optimization}: Given the optimized $\{\alpha, \beta\}$ value corresponding to a given $\gamma$ and $F_{th}$ for $\texttt{Filter}_\texttt{{*,1}}$, a second filter $\texttt{Filter}_\texttt{{*,2}}$ is optimized using similar iterative methods to solve (\ref{recycling_optimization_full}) and (\ref{recycling_optimization_partial}).
  
  \item \textbf{Evaluation}: The optimized local operators are then applied to the prepared quantum system, and the survival rate and fidelity of the resulting entanglement pairs are calculated, both for the normal filtering case (i.e., benchmark), and for filtering with recycling case, for comparison. Specifically, the normal filtering case is separately instantiated with full filtering and partial filtering schemes. 
  
\end{enumerate}

By following the aforementioned simulation methodology, we are able to determine the optimal design of our local operator for recycling the disposed photons, achieving a significant increase in high-fidelity survival rate over the optimized benchmark scheme. In the following subsections, we will discuss the specific results obtained for the full filtering and partial filtering schemes.

\subsection{Full Filter Results}

Our simulation results demonstrate that the full filtering scheme with qubit recycling shows a significant improvement in survival rate compared to the benchmark single filter protocol, shown in Fig. \ref{full_0.7} and Fig. \ref{full_0.9}. For the $F_{th} = 0.7$ case, our design adds $20.8\%$ to $31.2\%$ additional survival rate compared to the benchmark, for $\gamma \in (0.3676, 0.4059)$. Similarly, for the $F_{th} = 0.9$ case we observe a survival rate addition between $30.6\%$ and $31.2\%$, for $\gamma \in (0.1056, 0.1085)$. 

The limited range of $\gamma$ values is easily interpreted, as the values lower than this produce states with fidelity above the threshold with no filtering necessary, thus the optimal choice is to not use Gisin's local filter. In other words, the channel introduces such an insignificant amount of noise that the entanglement can simply pass through the channel without any filtering and still maintain high fidelity. For $\gamma$ values above this range, the amplitude damping effect is so strong that there does not exist a $\{\alpha, \beta\}$ values such that filtering will achieve a fidelity greater than $F_{th}$. To achieve high-fidelity entanglement in the high $\gamma$ range, one could cascade filters in series with $\texttt{Filter}_\texttt{{*,1}}$ which constitutes an orthogonal research topic. 

An analysis of the contributions of different filtering events to the final survival rate reveals that the improvement comes mostly from the $\tilde{\rho}_{10}$ and $\tilde{\rho}_{01}$ cases --- implying that one photon is reflected in one arm while its entangled counterpart is transmitted in the other arm, as shown in the histogram in Fig. \ref{full_histogram}. This indicates that our proposed qubit recycling protocol effectively recycles the disposed photons in these cases, leading to a higher overall survival rate without compromising the fidelity of the entangled photon pairs.

\subsection{Partial Filter Results}

Our simulation results for the partial filtering scheme show an improvement in the survival rate of similar degree to the full filtering scheme, which can be seen in Fig. \ref{partial_0.7} and Fig. \ref{partial_0.9}. Specifically, for the $F_{th} = 0.7$ case, the partial filtering scheme adds between $20.5\%$ and $25.0\%$ to the benchmark survival rate, for $\gamma \in (0.3676, 0.3824)$. For the $F_{th} = 0.9$ case, we similarly observe an additional $24.3\%$-$25.0\%$ increase in survival rate, for $\gamma \in (0.1056, 0.1079)$. 

We note an observed tradeoff between the full and partial filtering schemes. The partial filtering scheme has $\gamma$ ranges for which it is a viable design which are proper subsets of the corresponding full filter's $\gamma$ ranges, however the overall survival rates are significantly higher for the partial filtering scheme within those ranges. For the $F_{th} = 0.7$ case, the highest survival rate using the full filtering scheme is $56.1\%$, while the corresponding partial filtering scheme has a $74.9\%$ survival rate. A similar difference is observed in the $F_{th} = 0.9$ case, where the full filtering scheme has a maximum survival rate of $56.2\%$, and the corresponding partial filtering scheme has a survival rate of $75.0\%$.

This tradeoff can be explained by the increase in the probability of photons being initially transmitted through the first filter. Given that the filter is on only one side of the entanglement pair in the partial filtering scheme, compared to both sides in the full filtering scheme, the probability of being transmitted is much greater. This allows for less filtering being possible, though, which explains the smaller $\gamma$ ranges for which we see a gain in survival rate. These differences in contribution of the transmitted photons can be seen by comparing the histograms Fig. \ref{full_histogram} and Fig. \ref{partial_histogram}.

\subsection{Synchronization and Multi-party Agreement}

The results confirm the effectiveness of our qubit recycling protocol in enhancing the performance of entanglement distillation in both the full filtering and partial filtering schemes. However, both the full filtering scheme as well as the partial filtering scheme suffer from a potential synchronization challenge, which occurs in the $\tilde{\rho}_{10}$ and $\tilde{\rho}_{01}$ cases in the full filtering scheme, or the $\Tilde{\rho}_{0}$ case in the partial filtering scheme, where one photon in an entangled pair passes through its first filter (or is not filtered in the case of the partial filtering scheme), while the corresponding photon reflects off of its respective first filter, and subsequently passes through its second filter. As a result, the arrival times of the photons at Alice's and Bob's detectors will be different, leading to a discrepancy in their timesheets. When Alice and Bob compare their timesheets to identify photon coincidences, this discrepancy may cause difficulties in recognizing these events as coincidences, potentially leading them to be incorrectly discarded.

This time discrepancy can be avoided if Alice and Bob each measure the distance of their respective recycled light paths and share this information with each other, as well as the entanglement source. Alice and Bob can then compensate for the time difference for the recycled photons. In addition, the entanglement source can also use this information to emit photons only at intervals which are not equal to the interval between the arrivals of the entangled photons in these cases. This allows Alice and Bob to be certain that any photons arriving with such an interval between them can in fact be labeled a coincidence pair.

Furthermore, it is important to note that in the full filtering scheme, even if the $\tilde{\rho}_{10}$ and $\tilde{\rho}_{01}$ cases are excluded, the inclusion of the $\tilde{\rho}_{00}$ case alone still results in a  benefit in survival rate, albeit at a lower amount. Specifically, we see a $6.06\%$ increase in survival rate for $F_{th} = 0.7$, and a $6.24\%$ increase for $F_{th} = 0.9$. This is illustrated by Fig. \ref{full_histogram}. 
%%%%%%%%%%%%%%%%%%%%%%%%%%%%%%%%%%%%%%%%%%%%%%%%%%%%%%%%%%%%%%%%%%%%
\section{Conclusion and Future Work}\label{conclusion}
%%%%%%%%%%%%%%%%%%%%%%%%%%%%%%%%%%%%%%%%%%%%%%%%%%%%%%%%%%%%%%%%%%%%
In this paper, we have presented a novel qubit recycling protocol for improving the yield of high-fidelity entangled qubits in photonic quantum systems. By employing a second local filter, our approach effectively reclaims discarded entangled qubits, resulting in a substantial increase in the yield of high-fidelity entanglement pairs. Our proposed protocol achieves up to a 31.2\% gain in high-fidelity survival rate while incurring only moderate system complexity in terms of invested hardware and extra signaling for synchronization. Our work demonstrates the potential of qubit recycling in quantum entanglement distillation, which could have implications for the development of scalable and robust quantum communication networks.

An avenue for future work is to examine the applications of qubit recycling in different network models (e.g. multipartite entanglement, non-symmetric noise channels.) Another avenue is examining the local filter with a zero-valued parameter which breaks entanglement in the reflected photons. In some network models, using such a filter can be optimal, so finding use of these photons could lead to improvement over our proposed protocol.

\bibliography{ref}
\bibliographystyle{IEEEtran}

\end{document}